\documentclass[a4paper,10pt]{article}
\usepackage[utf8x]{inputenc}
\usepackage{amsmath,amsthm,amssymb,epsfig,fullpage}
%opening

\newtheorem{theorem}{Theorem}

\newtheorem{remark}{Remark}

\newtheorem{definition}{Definition}

\numberwithin{theorem}{section}
\numberwithin{lemma}{section}
\numberwithin{proposition}{section}
\numberwithin{equation}{section}
\numberwithin{remark}{section}
\numberwithin{remarks}{section}
\numberwithin{definition}{section}
\numberwithin{scheme}{section}
\numberwithin{example}{section}
\numberwithin{corollary}{section}
\usepackage{colortbl}

\newcommand\del[1]{}

\title{Hashimoto transform for stochastic Landau-Lifshitz-Gilbert equation}
\author{}

\begin{document}

\maketitle
\section{Introduction}

It is well known that the Heisenberg ferromagnet equation and the nonlinear Schr\"{o}dinger equation are equivalent (\cite{FT2007}, \cite{H1972}, \cite{NSVZ2007} and references therein). In this paper we will show that Hashimoto transformation is applicable to the one dimensional stochastic Landau-Lifshitz-Gilbert (LLG) equation (\cite{BBNP2013}, \cite{BGJ2013} and references therein) and transforms it to the stochastic nonlinear generalized heat equation with nonlocal (in space) interaction. We will start with the case of deterministic 1D LLG equation
\begin{equation}\label{eqn:heatflowharmmaps_d0}
\mathbf{u}_t=\beta\mathbf{u}\times \mathbf{u}_{xx}-\alpha\mathbf{u}\times(\mathbf{u}\times \mathbf{u}_{xx}),x\in \mathbb{R},t\geq 0.
\end{equation}
and prove in the Theorem \ref{thm:deterministic_case} that Hashimoto transform $q=|\mathbf{u}_{x}|e^{i\int\limits_{-\infty}^{\cdot}\frac{<\mathbf{u}\times\mathbf{u}_x,\mathbf{u}_{xx}>}{|\mathbf{u}_x|^2}\,dy}$ of a smooth solution satisfies to the following generalized heat equation
\begin{align}\label{eqn:heatflowharmmap_0}
\partial_t q &= \alpha\Big[q_{xx}+\frac{q}{2}\int\limits_{-\infty}^x
\left(q_x\overline{q}-q\overline{q}_x\right)\,dy\Big]+i\beta(q_{xx}+\frac{|q|^2}{2}q)\\
&\equiv (\alpha+i\beta)\Big[q_{xx}+\frac12 q|q|^2\Big]-\alpha q\int\limits_{-\infty}^x q\,\overline{q}_x\,dy.
\end{align}

Then we will consider the stochastic case and show that the following stochastic nonlinear 
heat equation 
\begin{align}
d_t\mathbf{q} &= \left((\alpha+i\beta)\mathbf{q}_{xx}+i\beta\frac{\mathbf{q}|\mathbf{q}|^2}{2}+\frac{\alpha\mathbf{q}}{2}\int\limits_{a}^x(\mathbf{q}_x\overline{\mathbf{q}}
-\overline{\mathbf{q}}_x\mathbf{q})\,dy\right)\,dt\nonumber\\
&\quad+d\partial_x(W^1+iW^2)-i\mathbf{q}\int\limits_a^x q^2\circ d_tW^1-q^1\circ d_tW^2,\label{eqn:Nonlinear_Stoch_Heat_0}
\end{align}
allows one to construct (assuming existence and certain smoothness of the solution) weak solution of stochastic LLG equation (Theorem \ref{thm:main})
\begin{equation}
d \mathbf{u}=(\beta\mathbf{u}\times\mathbf{u}_{xx}-
\alpha\mathbf{u}\times(\mathbf{u}\times\mathbf{u}_{xx}))\,dt+\mathbf{u}\times \circ d\widetilde{\bf W}_t,\label{eqn:SLLGtarget_0}
\end{equation}
where $\{\widetilde{\bf W}_t\}_{t\geq 0}$ is 3D space-time white noise iff we formally assume that $W^1,W^2$ are independent space-time white noises (remark \ref{rem:Space_time_White_noise}).
 We will see that nonlinear stochastic heat equation \eqref{eqn:Nonlinear_Stoch_Heat_0} is the compatibility condition for the auxiliary system
of linear equations \eqref{eqn:time_evolution}-\eqref{eqn:space_evolution} (where $\mathbf{p}, C(\mathbf{q})$ are defined by identities \eqref{eqn:internal_structure}). The stochastic LLG equation is the consequence of the auxiliary system \eqref{eqn:time_evolution},\eqref{eqn:space_evolution},\eqref{eqn:internal_structure}.

\section{Equivalence of LLG equation and generalized heat equation}
\begin{definition}\label{def:HashimotoTransform}
Let $\mathbf{u}:\mathbb{R}\to \mathbb{S}^2$ be a smooth function.
Define 
\begin{equation}
\Theta(\mathbf{u}):=|\mathbf{u}_x|,\quad\eta(\mathbf{u}):=\frac{<\mathbf{u}\times\mathbf{u}_x,\mathbf{u}_{xx}>}{|\mathbf{u}_x|^2}.
\end{equation}
Now the Hashimoto transform $\mathbf{u}\mapsto \mathcal{H}(u)\in C^{\infty}(\mathbb{R},\mathbb{C})$ is defined as follows
\begin{equation}\label{eqn:Hashimoto_transform}
\mathcal{H}(\mathbf{u}):=\Theta(\mathbf{u}) e^{i\int\limits_{-\infty}^{x}\eta(\mathbf{u})(y)\,dy}.
\end{equation}
\end{definition}
\begin{remark}
We can connect with $\mathbf{u}$ orthonormal basis $\{\mathbf{u},\frac{\mathbf{u}_x}{|\mathbf{u}_x|},\frac{\mathbf{u}\times\mathbf{u}_x}{|\mathbf{u}_x|}\}$. Then $\Theta(\mathbf{u})$ is a curvature of the basis and $\eta(\mathbf{u})$ is a torsion of the basis.
\end{remark}
We will need following identities
\begin{equation}\label{eqn:Hashimoto_transform_inverse}
\Theta(\mathbf{u})=|\mathcal{H}(\mathbf{u})|,\quad\eta(\mathbf{u})=i\frac{\mathcal{H}(\mathbf{u})\overline{\mathcal{H}(\mathbf{u})}_x-\mathcal{H}_x(\mathbf{u})\overline{\mathcal{H}(\mathbf{u})}}{2|\mathcal{H}(\mathbf{u})|^2}.
\end{equation}
\begin{theorem}\label{thm:deterministic_case}
Let $\mathbf{u}:[0,\infty)\times\mathbb{R}\to \mathbb{S}^2$ be a smooth solution of the Landau-Lifshitz-Gilbert equation
\begin{equation}\label{eqn:heatflowharmmaps_d1}
\mathbf{u}_t=\beta\mathbf{u}\times \mathbf{u}_{xx}-\alpha\mathbf{u}\times(\mathbf{u}\times \mathbf{u}_{xx}),x\in \mathbb{R},t\geq 0.
\end{equation}
Then its Hashimoto transform $q=\mathcal{H}(\mathbf{u}(t,\cdot)),t\in[0,\infty)$, $q:[0,\infty)\times\mathbb{R}\to\mathbb{C}$ is a smooth solution of the following equation
\begin{align}\label{eqn:heatflowharmmap_4}
\partial_t q &= \alpha\Big[q_{xx}+\frac{q}{2}\int\limits_{-\infty}^x
\left(q_x\overline{q}-q\overline{q}_x\right)\,dy\Big]+i\beta(q_{xx}+\frac{|q|^2}{2}q)\\
&\equiv (\alpha+i\beta)\Big[q_{xx}+\frac12 q|q|^2\Big]-\alpha q\int\limits_{-\infty}^x q\,\overline{q}_x\,dy.
\end{align}
\end{theorem}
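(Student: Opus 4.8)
The plan is to pass to the moving orthonormal frame suggested in the remark,
$\mathbf{e}_1=\mathbf{u}$, $\mathbf{e}_2=\mathbf{u}_x/|\mathbf{u}_x|$, $\mathbf{e}_3=\mathbf{u}\times\mathbf{u}_x/|\mathbf{u}_x|$ (well-defined wherever $\Theta=|\mathbf{u}_x|>0$, which is implicitly needed for $\eta$ to exist), and to recognise the target equation as the zero-curvature/compatibility condition of this frame. First I would record the spatial structure (Frenet) equations: differentiating the frame in $x$ and using orthonormality together with $\mathbf{u}_x=\Theta\mathbf{e}_2$ and the definition $\eta=\langle\mathbf{u}\times\mathbf{u}_x,\mathbf{u}_{xx}\rangle/|\mathbf{u}_x|^2$ gives
$$\partial_x\mathbf{e}_1=\Theta\mathbf{e}_2,\quad \partial_x\mathbf{e}_2=-\Theta\mathbf{e}_1+\eta\mathbf{e}_3,\quad \partial_x\mathbf{e}_3=-\eta\mathbf{e}_2,$$
so $\partial_x$ acts on $(\mathbf{e}_1,\mathbf{e}_2,\mathbf{e}_3)$ by an antisymmetric matrix $A$ with nonzero entries $\Theta$ and $\eta$.

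Next I would compute the time evolution. Writing $\mathbf{u}_{xx}=\partial_x(\Theta\mathbf{e}_2)=-\Theta^2\mathbf{e}_1+\Theta_x\mathbf{e}_2+\Theta\eta\mathbf{e}_3$ and substituting into the LLG equation, a short cross-product computation gives $\mathbf{u}_t=\partial_t\mathbf{e}_1=\gamma_2\mathbf{e}_2+\gamma_3\mathbf{e}_3$ with
$$\gamma_2=\alpha\Theta_x-\beta\Theta\eta,\qquad \gamma_3=\beta\Theta_x+\alpha\Theta\eta,$$
the $\mathbf{e}_1$ component being absent because LLG keeps $\mathbf{u}_t\perp\mathbf{u}$. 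By orthonormality the full time evolution is again given by an antisymmetric matrix $B$, namely $\partial_t\mathbf{e}_1=\gamma_2\mathbf{e}_2+\gamma_3\mathbf{e}_3$, $\partial_t\mathbf{e}_2=-\gamma_2\mathbf{e}_1+\gamma_1\mathbf{e}_3$, $\partial_t\mathbf{e}_3=-\gamma_3\mathbf{e}_1-\gamma_1\mathbf{e}_2$, with a single as-yet-undetermined coefficient $\gamma_1$ (the rotation rate of the normal plane). I would then impose the compatibility condition $\partial_t A-\partial_x B+[A,B]=0$, whose three independent upper-triangular components yield the evolution laws $\Theta_t=\gamma_{2,x}-\eta\gamma_3$ and $\eta_t=\gamma_{1,x}+\Theta\gamma_3$ together with the algebraic relation $\Theta\gamma_1=\gamma_{3,x}+\eta\gamma_2$ that fixes $\gamma_1$.

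Finally I would assemble the equation for $q=\Theta e^{i\phi}$, $\phi=\int_{-\infty}^x\eta\,dy$. Differentiating, $q_t=(\Theta_t+i\Theta\phi_t)e^{i\phi}$, where $\phi_t=\int_{-\infty}^x\eta_t\,dy=\gamma_1+\int_{-\infty}^x\Theta\gamma_3\,dy$ after integrating the $\eta_t$-law. Substituting the $\Theta_t$-law and $i\Theta\gamma_1=i(\gamma_{3,x}+\eta\gamma_2)$, and introducing $\Gamma:=\gamma_2+i\gamma_3=(\alpha+i\beta)(\Theta_x+i\Theta\eta)=(\alpha+i\beta)e^{-i\phi}q_x$, the local terms collapse because $(\Gamma_x+i\eta\Gamma)e^{i\phi}=\partial_x(\Gamma e^{i\phi})=(\alpha+i\beta)q_{xx}$. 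The remaining term $i\Theta e^{i\phi}\int_{-\infty}^x\Theta\gamma_3\,dy=iq\int_{-\infty}^x\Theta\gamma_3\,dy$ reproduces exactly the cubic and nonlocal contributions once one uses $\Theta^2=|q|^2$ and the identities $q_x\overline{q}-q\overline{q}_x=2i\Theta^2\eta$ and $\int_{-\infty}^x\Theta\Theta_x\,dy=\frac12\Theta^2$; this gives the first displayed form, and a single integration by parts of $\int_{-\infty}^x q\overline{q}_x\,dy$ converts it into the second.

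The main obstacle is the handling of the endpoint $-\infty$: the derivation rests on enough decay of the solution (in practice $\Theta=|\mathbf{u}_x|\to 0$ and $\gamma_1\to 0$ as $x\to-\infty$) to kill the boundary terms arising when integrating $\eta_t$ and $\Theta\Theta_x$, and to justify differentiating $\phi$ under the integral sign; this is also what pins down the phase normalisation implicit in the lower limit $-\infty$. A secondary but essential point is nondegeneracy $\Theta>0$, needed both for the frame to be smooth and for the relation $\Theta\gamma_1=\gamma_{3,x}+\eta\gamma_2$ to define $\gamma_1$. The cross-product and compatibility algebra itself is routine once the frame is in place.
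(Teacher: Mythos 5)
Your proposal is correct, and it reaches the theorem by a genuinely different route from the paper. The paper's Step~1 computes $\Theta'$ and $\eta'$ head-on from the definitions, via a long chain of vector identities ($|\mathbf{a}|^2|\mathbf{b}|^2=|\mathbf{a}\times\mathbf{b}|^2+|\langle\mathbf{a},\mathbf{b}\rangle|^2$, the expansion $|\mathbf{u}_{xx}|^2=\Theta^4+|\Theta_x|^2+\eta^2\Theta^2$, the decomposition of $\frac{d}{dt}\langle\mathbf{u}\times\mathbf{u}_x,\mathbf{u}_{xx}\rangle$ into terms $A_\alpha,\dots,C_\beta$, etc.), arriving at exactly the system you obtain,
\[
\Theta'=\alpha(\partial_{xx}^2-\eta^2)\Theta-\beta(\eta_x\Theta+2\Theta_x\eta),\qquad
\eta'=\alpha\eta_{xx}+2\alpha\Big(\eta\frac{\Theta_x}{\Theta}\Big)_x+\alpha\eta\Theta^2+\beta\Big(\frac{\Theta_{xx}}{\Theta}+\frac{\Theta^2}{2}-\eta^2\Big)_x .
\]
You instead read these off as the three components of the zero-curvature condition $A_t-B_x+[A,B]=0$ for the Frenet frame, with $\gamma_2,\gamma_3$ computed from LLG and $\gamma_1$ fixed by the $(1,3)$ entry; I checked that your $\gamma_1=\beta(\Theta_{xx}/\Theta-\eta^2)+\alpha\eta_x+2\alpha\eta\Theta_x/\Theta$ and the resulting $\Theta_t,\eta_t$ reproduce the paper's system exactly. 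Your Step~2 is likewise a repackaging of the paper's: where the paper splits $\partial_t q$ into terms $P,Q,R,S$ and evaluates them via the inverse-Hashimoto identities, you collapse the local part into $\partial_x(\Gamma e^{i\phi})=(\alpha+i\beta)q_{xx}$ with $\Gamma=\gamma_2+i\gamma_3=(\alpha+i\beta)e^{-i\phi}q_x$, and the nonlocal part via $q_x\overline{q}-q\overline{q}_x=2i\Theta^2\eta$. What your approach buys is brevity, a structural explanation of why the two seemingly unrelated evolution laws for $\Theta$ and $\eta$ close up, and a direct bridge to the second half of the paper: the auxiliary system \eqref{eqn:time_evolution}--\eqref{eqn:space_evolution} used in Theorem~\ref{thm:main} is precisely the (gauge-transformed, stochastic) version of your frame equations, so your proof makes the deterministic theorem a special case of the same mechanism rather than an independent computation. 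What the paper's brute-force route buys is that it never needs to introduce the undetermined coefficient $\gamma_1$ or invoke the compatibility formalism, at the cost of length. Your closing caveats (nondegeneracy $\Theta>0$ for the frame and for $\eta$ to be defined, and decay of $\Theta$ and $\gamma_1$ at $-\infty$ to kill boundary terms when integrating $\eta_t$ and $\Theta\Theta_x$) are real hypotheses that the paper also uses implicitly without stating them; flagging them is a point in your favour, not a gap.
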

\begin{proof}
Our proof will be divided in two steps. First, we will deduce equation for the pair
$(\Theta,\eta)$ (we will omit argument $\mathbf{u}$ from now on). In the second step, we will
deduce equation for $q$.

{\bf Step 1:}

We have by elementary calculations and identities 
\[
<\mathbf{u}_x,\mathbf{u}>=\frac12\partial_x(|\mathbf{u}|^2)=0, |\mathbf{u}_x|^2=\partial_x(<\mathbf{u}_x,\mathbf{u}>)-<\mathbf{u}_{xx},\mathbf{u}>=-<\mathbf{u}_{xx},\mathbf{u}>,
\]
that
\begin{align*}
\Theta' &=\frac{<\mathbf{u}_x,\mathbf{u}_{tx}>}{\Theta}\\
&=\frac{<\beta\mathbf{u}\times\mathbf{u}_{xxx}+\beta\mathbf{u}_x\times \mathbf{u}_{xx}+\alpha \mathbf{u}_{xxx}+\alpha|\mathbf{u}_x|^2\mathbf{u}_x+\alpha \mathbf{u}\partial_x(|\mathbf{u}_x|^2),\mathbf{u}_x>}{\Theta}\\
&=\frac{<\beta\mathbf{u}\times\mathbf{u}_{xxx}+\alpha \mathbf{u}_{xxx}+\alpha|\mathbf{u}_x|^2\mathbf{u}_x,\mathbf{u}_x>}{\Theta}\\
&=\frac{\beta<\mathbf{u}_x,\mathbf{u}\times\mathbf{u}_{xxx}>+\alpha<\mathbf{u}_{xxx},\mathbf{u}_x>+\alpha |\mathbf{u}_x|^4}{\Theta}\\
&=\frac{\beta\partial_x (<\mathbf{u}_x\times\mathbf{u},\mathbf{u}_{xx}>)+\alpha\left[\partial_x(<\mathbf{u}_{xx},\mathbf{u}_x>)-|\mathbf{u}_{xx}|^2\right]+\alpha |<\mathbf{u}_{xx},\mathbf{u}>|^2}{\Theta}\\
&=\frac{-\beta\partial_x(\eta\Theta^2)+\alpha\left[\partial_{xx}^2\left(\frac{\Theta^2}{2}\right)-|\mathbf{u}_{xx}|^2+|<\mathbf{u}_{xx},\mathbf{u}>|^2\right]}{\Theta}\\
&=-\beta\eta_x\Theta-2\beta\Theta_x\eta+\alpha \Theta_{xx}+\alpha\frac{|\Theta_x|^2}{\Theta}-\alpha\frac{|\mathbf{u}_{xx}|^2-|<\mathbf{u}_{xx},\mathbf{u}>|^2}{\Theta}.
\end{align*}
Consequently, by elementary identity  $|\mathbf{a}|^2|\mathbf{b}|^2=|\mathbf{a}\times \mathbf{b}|^2+|<\mathbf{a},\mathbf{b}>|^2,\mathbf{a},\mathbf{b}\in\mathbb{R}^3$ we have that
\begin{equation}\label{eqn:Theta_1}
\Theta'=\alpha\Theta_{xx}+\alpha\frac{|\Theta_{x}|^2}{\Theta}
-\alpha\frac{|\mathbf{u}\times\mathbf{u}_{xx}|_{\mathbf{R}^3}^2}{\Theta}-
\beta\eta_x\Theta-2\beta\Theta_x\eta.
\end{equation}
Denote $\mathbf{v}:=\mathbf{u}\times\mathbf{u}_x$. 
Now 
\begin{align}
&\frac{|\Theta_{x}|^2-|\mathbf{u}\times\mathbf{u}_{xx}|_{\mathbf{R}^3}^2}{\Theta}=
\frac{\frac{|\partial_x(\Theta^2)|^2}{4\Theta^2}-|\mathbf{v}_x|^2}{|\mathbf{v}|}\nonumber\\
&=\frac{\frac{|\partial_x(|\mathbf{v}|^2)|^2}{4|\mathbf{v}|^2}-|\mathbf{v}_x|^2}{|\mathbf{v}|}
=-\frac{|\mathbf{v}\times\mathbf{v}_x|^2}{|\mathbf{v}|^3}\nonumber\\
&=-\frac{|<\mathbf{u}\times\mathbf{u}_x,\mathbf{u}_{xx}>|^2}{|\mathbf{u}_x|^3}
=-\eta^2\Theta.\label{eqn:ThetaV}
\end{align}
Hence by equation \eqref{eqn:Theta_1} and identity \eqref{eqn:Theta_1} we can deduce that
\begin{equation}\label{eqn:Theta}
\Theta'=\alpha(\partial_{xx}^2-\eta^2)\Theta-\beta\eta_x\Theta-2\beta\Theta_x\eta.
\end{equation}
It remains to deduce equation for $\eta$. We have
\begin{align}\label{eqn:eta_aux_0}
\eta' &=\frac{-2\alpha(\partial_{xx}^2-\eta^2)\Theta-\beta\eta_x\Theta-2\beta\Theta_x\eta}{\Theta^3}\eta\Theta^2+\frac{1}{\Theta^2}\frac{d}{dt}<\mathbf{u}\times\mathbf{u}_x,\mathbf{u}_{xx}>\\
&=-2\alpha\eta\frac{\Theta_{xx}}{\Theta}+2\alpha\eta^3-\beta\eta_x-2\beta\eta\frac{\Theta_x}{\Theta}+\frac{1}{\Theta^2}
\big(<\mathbf{u}',\mathbf{u}_x\times\mathbf{u}_{xx}>+<\mathbf{u}\times\mathbf{u}_x',\mathbf{u}_{xx}>+<\mathbf{u}\times\mathbf{u}_x,\mathbf{u}_{xx}'>\big)\nonumber\\
&=-2\alpha\eta\frac{\Theta_{xx}}{\Theta}+2\alpha\eta^3-\beta\eta_x-2\beta\eta\frac{\Theta_x}{\Theta}+\frac{1}{\Theta^2}(A+B+C)\nonumber
\end{align}
Now let us calculate $A,B$ and $C$. We will deal separately with the terms proportional to $\alpha$ and $\beta$. We have $A=A_{\alpha}+A_{\beta}$, where
\begin{equation}\label{eqn:eta_aux_1}
A_{\alpha}=\alpha<\mathbf{u}_{xx}+|\mathbf{u}_x|^2\mathbf{u},\mathbf{u}_x\times\mathbf{u}_{xx}>=
\alpha\eta\Theta^4,\nonumber
\end{equation}
\begin{align}\label{eqn:eta_aux_1_beta}
A_{\beta}&=\beta<\mathbf{u}\times\mathbf{u}_{xx},
\mathbf{u}_x\times\mathbf{u}_{xx}>\\
&=-\beta<\mathbf{u},\mathbf{u}_{xx}><\mathbf{u}_x,\mathbf{u}_{xx}>=\beta\Theta^2\frac{\partial_x(\Theta^2)}{2}=\beta\Theta^3\Theta_x.\nonumber
\end{align}
We have $B=B_{\alpha}+B_{\beta}$, where
\begin{align}\label{eqn:eta_aux_2}
B_{\alpha}&=-\alpha <\mathbf{u}_{xxx},\mathbf{u}\times\mathbf{u}_{xx}>-\alpha|\mathbf{u}_x|^2<\mathbf{u}_x,\mathbf{u}\times \mathbf{u}_{xx}>\\
&=\alpha\eta\Theta^4-\alpha <\mathbf{u}_{xxx},\mathbf{u}\times\mathbf{u}_{xx}>.\nonumber
\end{align}
To calculate $B_{\beta}$ we will need the following auxiliary identity.
\begin{equation}\label{eqn:eta_aux_8}
|\mathbf{u}_{xx}|^2=\Theta^4+|\Theta_x|^2+\eta^2\Theta^2.
\end{equation}
Indeed, expanding $|\mathbf{u}_{xx}|^2$ in the orthonormal basis $\{\mathbf{u},\frac{\mathbf{u}_x}{|\mathbf{u}_x|},\frac{\mathbf{u}\times\mathbf{u}_x}{|\mathbf{u}_x|}\}$ we get
\begin{align*}
|\mathbf{u}_{xx}|^2&=|<\mathbf{u}_{xx},\mathbf{u}>|^2+|<\mathbf{u}_{xx},\frac{\mathbf{u}_x}{|\mathbf{u}_x|}>|^2+|<\mathbf{u}_{xx},\frac{\mathbf{u}\times\mathbf{u}_x}{|\mathbf{u}_x|}>|^2\\
&=\Theta^4+\frac{1}{\Theta^2}|\partial_x\left(\frac{\Theta^2}{2}\right)|^2+\eta^2\Theta^2\\
&=\Theta^4+|\Theta_x|^2+\eta^2\Theta^2.
\end{align*}
Hence by elementary calculations and identity \eqref{eqn:eta_aux_8} we get
\begin{align}\label{eqn:eta_aux_2_beta}
B_{\beta}&=\beta<\mathbf{u}\times (\mathbf{u}\times\mathbf{u}_{xxx}+\mathbf{u}_x\times\mathbf{u}_{xx}),\mathbf{u}_{xx}>\\
&=\beta<\mathbf{u}<\mathbf{u},\mathbf{u}_{xxx}>-\mathbf{u}_{xxx}+\mathbf{u}_x<\mathbf{u},\mathbf{u}_{xx}>, \mathbf{u}_{xx}>\nonumber\\
&=\beta<\mathbf{u},\mathbf{u}_{xx}><\mathbf{u},\mathbf{u}_{xxx}>-\beta<\mathbf{u}_{xx},\mathbf{u}_{xxx}>+\beta<\mathbf{u},\mathbf{u}_{xx}><\mathbf{u}_x,\mathbf{u}_{xx}>\nonumber\\
&=\beta<\mathbf{u},\mathbf{u}_{xx}>\partial_x(<\mathbf{u},\mathbf{u}_{xx}>)-<\mathbf{u}_{xx},\mathbf{u}_{xxx}>=-\beta\Theta^2\partial_x(-\Theta^2)-\beta\frac12\partial_x(|\mathbf{u}_{xx}|^2)\nonumber\\
&=2\beta\Theta^3\Theta_x-\beta\frac12\partial_x(\Theta^4+|\Theta_x|^2+\eta^2\Theta^2)\nonumber\\
&=-\beta\Theta_x\Theta_{xx}-\beta\eta\eta_x\Theta^2-\beta\Theta\Theta_x\eta^2.\nonumber
\end{align}
We have $C=C_{\alpha}+C_{\beta}$, where
\begin{equation}\label{eqn:eta_aux_3}
C_{\alpha}=\alpha<\mathbf{u}\times\mathbf{u}_x,\mathbf{u}_{xxxx}+|\mathbf{u}_x|^2\mathbf{u}_{xx}>=
\alpha\eta\Theta^4+\alpha<\mathbf{u}\times\mathbf{u}_x,\mathbf{u}_{xxxx}>.
\end{equation}
\begin{align}
C_{\beta}&=\beta<\mathbf{u}\times\mathbf{u}_x,
\mathbf{u}\times\mathbf{u}_{xxxx}+2\mathbf{u}_x\times\mathbf{u}_{xxx}>\nonumber\\
&=\beta <\mathbf{u}_x,\mathbf{u}_{xxxx}>-2\beta\Theta^2 <\mathbf{u},\mathbf{u}_{xxx}>
\end{align}
Now we can notice that 
\[
<\mathbf{u},\mathbf{u}_{xxx}>=\partial_x(<\mathbf{u},\mathbf{u}_{xx}>)-<\mathbf{u}_x,\mathbf{u}_{xx}>=\partial_x(-\Theta^2)-\partial_x(\frac{\Theta^2}{2})=-3\Theta\Theta_x.
\]
Furthermore,
\begin{align*}
<\mathbf{u}_x,\mathbf{u}_{xxxx}>&=\partial_x (<\mathbf{u}_x,\mathbf{u}_{xxx}>)-<\mathbf{u}_{xx},\mathbf{u}_{xxx}>\\
&=\partial_x(\partial_x(<\mathbf{u}_x,\mathbf{u}_{xx}>)-|\mathbf{u}_{xx}|^2)-\frac12\partial_x (|\mathbf{u}_{xx}|^2)\\
&= -\frac32\partial_x(|\mathbf{u}_{xx}|^2)+\frac12\partial^3_{xxx}(|\mathbf{u}_x|^2)=-\frac32\partial_x(\Theta^4+|\Theta_x|^2+\eta^2\Theta^2)+\frac12\partial^3_{xxx}(\Theta^2).
\end{align*}
Hence,
\[
C_{\beta}=\beta(\Theta\Theta_{xxx}-3\eta\eta_x\Theta^2-3\Theta\Theta_x\eta^2).
\]
Thus
\begin{equation}\label{eqn:eta_aux_4}
A_{\alpha}+B_{\alpha}+C_{\alpha}=3\alpha\eta\Theta^4+\alpha \big(<\mathbf{u}\times\mathbf{u}_x,\mathbf{u}_{xxxx}>+
<\mathbf{u}_{xx}\times\mathbf{u},\mathbf{u}_{xxx}>\big),
\end{equation}
and
\begin{equation}\label{eqn:eta_aux_4_beta}
A_{\beta}+B_{\beta}+C_{\beta}=\beta(\Theta\Theta_{xxx}-\Theta_x\Theta_{xx}+\Theta^3\Theta_x-
4\eta\eta_x\Theta^2-4\Theta\Theta_x\eta^2).
\end{equation}
Since
\begin{equation}\label{eqn:eta_aux_5}
\partial_{xx}^2(\eta\Theta^2)=\partial_x(<\mathbf{u}\times\mathbf{u}_x,\mathbf{u}_{xxx}>)
=\big(<\mathbf{u}\times\mathbf{u}_x,\mathbf{u}_{xxxx}>+
<\mathbf{u}\times\mathbf{u}_{xx},\mathbf{u}_{xxx}>\big),
\end{equation}
we deduce that
\begin{equation}\label{eqn:eta_aux_6}
A_{\alpha}+B_{\alpha}+C_{\alpha}=3\alpha\eta\Theta^4+\alpha\partial_{xx}^2(\eta\Theta^2)+2\alpha
<\mathbf{u}_{xx}\times\mathbf{u},\mathbf{u}_{xxx}>.
\end{equation}
Thus, it remains to calculate $<\mathbf{u}_{xx}\times\mathbf{u},\mathbf{u}_{xxx}>$. Expanding this quantity in orthonormal basis $\{\mathbf{u},\frac{\mathbf{u}_x}{|\mathbf{u}_x|},\frac{\mathbf{u}\times\mathbf{u}_x}{|\mathbf{u}_x|}\}$ we get
\begin{align}\label{eqn:eta_aux_7}
<\mathbf{u}_{xx}\times\mathbf{u},\mathbf{u}_{xxx}>&=
<\mathbf{u}_xx\times\mathbf{u},\mathbf{u}><\mathbf{u},\mathbf{u}_{xxx}>+
<\mathbf{u}_{xx}\times\mathbf{u},\frac{\mathbf{u}_x}{|\mathbf{u}_x|}><\frac{\mathbf{u}_x}{|\mathbf{u}_x|},\mathbf{u}_{xxx}>\\
&+<\mathbf{u}_{xx}\times\mathbf{u},\frac{\mathbf{u}\times\mathbf{u}_x}{|\mathbf{u}_x|}><\frac{\mathbf{u}\times\mathbf{u}_x}{|\mathbf{u}_x|},\mathbf{u}_{xxx}>\nonumber\\
&=0+\eta <\mathbf{u}_x,\mathbf{u}_{xxx}>-\frac{<\mathbf{u}_x,\mathbf{u}_{xx}>}{|\mathbf{u}_x|^2}
<\mathbf{u}\times\mathbf{u}_x,\mathbf{u}_{xxx}>\nonumber\\
&=\eta <\mathbf{u}_x,\mathbf{u}_{xxx}>-\frac{<\mathbf{u}_x,\mathbf{u}_{xx}>}{|\mathbf{u}_x|^2}
\partial_x(\eta\Theta^2)\nonumber\\
&=\eta\left[\partial_x(<\mathbf{u}_x,\mathbf{u}_{xx}>)-|\mathbf{u}_{xx}|^2\right]-
\frac{1}{\Theta^2}\partial_x(\eta\Theta^2)\partial_x(\frac{\Theta^2}{2})\nonumber\\
&=\eta\partial_{xx}\left(\frac{\Theta^2}{2}\right)-\eta |\mathbf{u}_{xx}|^2
-\frac{\Theta_x}{\Theta}\big(\Theta^2\eta+2\eta\Theta\Theta_x\big).\nonumber
\end{align}

Combining \eqref{eqn:eta_aux_7} and \eqref{eqn:eta_aux_8} we deduce that 
\begin{equation}\label{eqn:eta_aux_9}
<\mathbf{u}_{xx}\times\mathbf{u},\mathbf{u}_{xxx}>=\eta\Theta\Theta_{xx}+\eta\Theta_x^2-
\eta\Theta^4-\eta\Theta_x^2-\eta^3\Theta^2-\Theta\Theta_x\eta_x-2\eta|\Theta_x|^2.
\end{equation}
Hence, combining \eqref{eqn:eta_aux_6} and \eqref{eqn:eta_aux_9} we conclude that
\begin{equation}\label{eqn:eta_aux_10}
A_{\alpha}+B_{\alpha}+C_{\alpha}=\alpha\eta\Theta^4+\alpha\eta_{xx}\Theta^2+2\alpha\eta_x\Theta\Theta_x-
2\alpha\eta|\Theta_x|^2+4\alpha\eta\Theta\Theta_{xx}-2\alpha\eta^3\Theta^2.
\end{equation}
Finally, by identities \eqref{eqn:eta_aux_0}, \eqref{eqn:eta_aux_4_beta} and \eqref{eqn:eta_aux_10} 
we can conclude that 
\begin{equation}
\eta'=\alpha\eta_{xx}+2\alpha\left(\eta\frac{\Theta_x}{\Theta}\right)_x+\alpha\eta\Theta^2
+\beta(\frac{\Theta_{xx}}{\Theta}+\frac{\Theta^2}{2}-\eta^2)_x.
\end{equation}

Hence, we conclude that from LLG equation \eqref{eqn:heatflowharmmaps_d1} follows that
\begin{align}\label{eqn:heatflowharmmap_2}
\Theta' &= \alpha(\partial_{xx}^2-\eta^2)\Theta-\beta(\eta_x\Theta+2\Theta_x\eta),\\
\eta' &= \alpha\eta_{xx}+2\alpha\left(\eta\frac{\Theta_x}{\Theta}\right)_x+\alpha\eta\Theta^2+
\beta(\frac{\Theta_{xx}}{\Theta}+\frac{\Theta^2}{2}-\eta^2)_x,\nonumber
\end{align}
where $\Theta=|\mathbf{u}_x|$, $\eta=\frac{<\mathbf{u}\times\mathbf{u}_x,\mathbf{u}_{xx}>}{|\mathbf{u}_x|^2}$.

{\bf Step 2:}

By definition of $q$ we have,
\[
\partial_t q=q\left(\frac{\Theta'}{\Theta}+i\int\limits_{-\infty}^x\eta'dy\right).
\]
%We will look first at the case of $\beta=0$.
By identities \eqref{eqn:heatflowharmmap_2} we deduce that
\begin{align}\label{eqn:heatflowharmmap_3}
\partial_t q&=\alpha q\Big(\frac{\Theta_{xx}}{\Theta}-\eta^2+i\big(\eta_x+2\frac{\eta\Theta_x}{\Theta}\big)+i\int\limits_{\cdot}^x\eta\Theta^2\,dy\Big)+\beta q(
-(\eta_x+2\frac{\eta\Theta_x}{\Theta}\big)+i(\frac{\Theta_{xx}}{\Theta}-\eta^2+\frac{\Theta^2}{2}))\\
&=\alpha \Big(P+Q+R\Big)+\beta S
\end{align}
Now we will calculate these four terms:
\begin{align}
R&=i q\int\limits_{\cdot}^x\eta\Theta^2\,dy=\frac{q}{2}\int\limits_{-\infty}^x
\left(q_x\overline{q}-q\overline{q}_x\right)\,dy\nonumber\\
&=\frac12 q|q|^2-q\int\limits_{-\infty}^x q\,d\overline{q}\label{eqn:aux_Cterm}
\end{align}
To calculate terms $P$ and $Q$ we will need following auxiliary identities. They can be deduced immediately by elementary calculations.
\begin{align}\label{eqn:Hashimoto_transform_inverse_2}
\Theta &= |q|,\quad\eta = i\frac{q\overline{q}_x-q_x\overline{q}}{2|q|^2}\\
\Theta_x &= \frac{q_x\overline{q}+q\overline{q}_x}{2|q|},\nonumber\\
\Theta_{xx} &= -\frac{(q_x\overline{q}+q\overline{q}_x)^2}{4|q|^3}+\frac{q_{xx}\overline{q}+q\overline{q}_{xx}+2q_x\overline{q}_x}{2|q|},\nonumber\\
\eta_x &= \frac{i}{2}\frac{|q|^2 q\overline{q}_{xx}-q_{xx}\overline{q}|q|^2-q^2\overline{q}_x^2+q_x^2\overline{q}^2}{|q|^4}.\nonumber
\end{align}
Consequently,
\begin{equation}
P=q\Big(\frac{\Theta_{xx}}{\Theta}-\eta^2\Big)=\frac{q_{xx}}{2}+\overline{q}_{xx}\frac{q^2}{2|q|^2},\label{eqn:aux_Aterm}
\end{equation}
\begin{equation}\label{eqn:aux_Bterm}
Q=i q\eta_x+2i q\eta\frac{\Theta_x}{\Theta}=
\frac{q_{xx}}{2}-\overline{q}_{xx}\frac{q^2}{2|q|^2}.
\end{equation}
Consequently,
\begin{equation}\label{eqn:aux_Dterm}
S=i(q_{xx}+\frac{|q|^2}{2}q).
\end{equation}

Thus, by identities \eqref{eqn:heatflowharmmap_3}, \eqref{eqn:aux_Aterm}, 
\eqref{eqn:aux_Bterm}, \eqref{eqn:aux_Cterm} and \eqref{eqn:aux_Dterm} we can conclude that 
the equation \eqref{eqn:heatflowharmmap_4} holds.

\end{proof}

Now we are going to show that a solution of the nonlinear stochastic heat equation can be used to construct the weak solution of stochastic Landau-Lifshitz-Gilbert equation. The main idea of the construction is to inverse (in certain sense explained below) Hashimoto transform.
We will need the following auxiliary system:
\begin{definition}
Let $\mathbf{q}=q^1+iq^2\in L^{\infty}([0,\infty),L^2(\Omega\times S^1,\mathbb{C}))$, and 
$\mathbf{p}=p^1+ip^2\in L^{2}([0,\infty),L^2(\Omega\times S^1,\mathbb{C}))$.

\begin{equation}
d_t\begin{pmatrix}
\mathbf{u}\\
\mathbf{e}\\
\mathbf{u}\times\mathbf{e}
\end{pmatrix}
=
\begin{pmatrix}
0 & p^1 & p^2\\
-p^1 & 0 & C(\mathbf{q})\\
 -p^2 & -C(\mathbf{q}) & 0
\end{pmatrix}
\cdot
\begin{pmatrix}
\mathbf{u}\\
\mathbf{e}\\
\mathbf{u}\times\mathbf{e}
\end{pmatrix}
dt+
\begin{pmatrix}
0 & dW^1 & dW^2\\
-dW^1 & 0 & d\Psi\\
 -dW^2 & -d\Psi & 0
\end{pmatrix}
\circ
\begin{pmatrix}
\mathbf{u}\\
\mathbf{e}\\
\mathbf{u}\times\mathbf{e}
\end{pmatrix},\label{eqn:time_evolution}
\end{equation}
\begin{equation}
\partial_x\begin{pmatrix}
\mathbf{u}\\
\mathbf{e}\\
\mathbf{u}\times\mathbf{e}
\end{pmatrix}
=
\begin{pmatrix}
0 & q^1 & q^2\\
-q^1 & 0 & 0\\
-q^2 & 0 & 0
\end{pmatrix}
\cdot
\begin{pmatrix}
\mathbf{u}\\
\mathbf{e}\\
\mathbf{u}\times\mathbf{e}
\end{pmatrix},\label{eqn:space_evolution}
\end{equation}
where $W^1,W^2$ are independent Wiener processes given by
\begin{align}\label{eqn:noise_structure}
W^i(t,x)&=\sum\limits_{l=1}^{\infty}c^l\sigma^l(x)\beta^i_l(t),\sum\limits_{l=1}^{\infty}c_l^2<\infty,\\
&\{\beta_l^i\}_{l=1}^{\infty}-\mbox{ i.i.d. 1D Brownian motions}, \{\sigma^l\}_{l=1}^{\infty}-\mbox{ orthonormal basis in $L^2(S^1,\mathbb{R})$,}, i\in\mathbb{N}\nonumber
\end{align}
\begin{equation}\label{eqn:internal_structure}
\mathbf{p}=(\alpha+i\beta)\mathbf{q}_x, C(\mathbf{q})=-\frac12\beta|\mathbf{q}|^2+\frac{i\alpha}{2}\int\limits_{a}^x(\mathbf{q}_x\overline{\mathbf{q}}-\overline{\mathbf{q}}_x\mathbf{q})\,dy,
\Psi(t,x)=\int\limits_0^t\int\limits_a^x q^2\circ dW^1-q^1\circ dW^2,
\end{equation}
and stochastic integrals above are understood in Stratonovich sense.
\end{definition}
\begin{theorem}\label{thm:main}
Assume that we are given solution $\mathbf{q}\in L^2(\Omega, C([0,\infty),W^{2,2}(S^1,\mathbb{C})))$ of the following equation
\begin{align}
d_t\mathbf{q} &= \left((\alpha+i\beta)\mathbf{q}_{xx}+i\beta\frac{\mathbf{q}|\mathbf{q}|^2}{2}+\frac{\alpha\mathbf{q}}{2}\int\limits_{a}^x(\mathbf{q}_x\overline{\mathbf{q}}
-\overline{\mathbf{q}}_x\mathbf{q})\,dy\right)\,dt\nonumber\\
&\quad+d\partial_x(W^1+iW^2)-i\mathbf{q}\int\limits_a^x q^2\circ d_tW^1-q^1\circ d_tW^2,\label{eqn:Nonlinear_Stoch_Heat}
\end{align}
and $\mathbf{m}\in \mathbb{S}^2$, $\mathbf{e}_0\in T_m\mathbb{S}^2, |\mathbf{e}_0|=1$. 
Then system \eqref{eqn:time_evolution}, \eqref{eqn:space_evolution}, \eqref{eqn:noise_structure},
\eqref{eqn:internal_structure} has a solution $(\mathbf{u},\mathbf{e})\in L^{\infty}([0,\infty),L^2(\Omega,H^1(S^1,TS^2)))$ such that $\mathbf{u}(0,a)=\mathbf{m},\mathbf{e}(0,a)=\mathbf{e}_0$. Furthermore,
$\mathbf{u}$ solves Landau-Lifshitz-Gilbert equation
\begin{equation}
d \mathbf{u}=(\beta\mathbf{u}\times\mathbf{u}_{xx}-
\alpha\mathbf{u}\times(\mathbf{u}\times\mathbf{u}_{xx}))\,dt+\mathbf{u}\times \circ d\widetilde{\bf W}_t\label{eqn:SLLGtarget}
\end{equation}
where 
\[
\widetilde{\bf W}_t=\int\limits_0^t\mathbf{e}(s)dW^2(s)+\mathbf{e}\times\mathbf{u}dW^1(s)+
\mathbf{u}(s)dW^3(s),
\]
gaussian process with zero mean and quadratic covariation given by formula
\begin{align}
\mathbb{E}&\left[<\widetilde{\bf W}_t,\phi>_{L^2(S^1,\mathbb{R}^3)}<\widetilde{\bf W}_t,\psi>_{L^2(S^1,\mathbb{R}^3)}\right]\label{eqn:Quadvariation}\\
&=\sum\limits_{l=1}^{\infty}c_l^2\int\limits_0^t\mathbb{E}\Big[\int\limits_{S^1}<\phi,\mathbf{e}>_{\mathbb{R}^3}\sigma^l\,dx\int\limits_{S^1}<\psi,\mathbf{e}>_{\mathbb{R}^3}\sigma^l\,dx+\int\limits_{S^1}<\phi,\mathbf{u}>_{\mathbb{R}^3}\sigma^l\,dx\int\limits_{S^1}<\psi,\mathbf{u}>_{\mathbb{R}^3}\sigma^l\,dx\nonumber\\
&+\int\limits_{S^1}<\phi,\mathbf{u}\times\mathbf{e}>_{\mathbb{R}^3}\sigma^l\,dx\int\limits_{S^1}<\psi,\mathbf{u}\times\mathbf{e}>_{\mathbb{R}^3}\sigma^l\,dx
\Big]\,ds,\phi,\psi\in L^2(S^1,\mathbb{R}^3).\nonumber
\end{align}
Moreover, $\mathbf{u}(t)$ and $\mathbf{q}(t),t\geq 0$ are connected with each other through Hashimoto transform introduced in the previous theorem.
\end{theorem}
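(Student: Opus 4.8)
The plan is to read the auxiliary system as the pair of linear evolution equations for the moving orthonormal frame $F=(\mathbf u,\mathbf e,\mathbf u\times\mathbf e)^{\mathsf T}$: equation \eqref{eqn:space_evolution} prescribes $\partial_x F = U F$ with $U$ the skew-symmetric (hence $\mathfrak{so}(3)$-valued) matrix built from $q^1,q^2$, while \eqref{eqn:time_evolution} prescribes the Stratonovich temporal law $d_t F = (VF)\,dt + (dM)\,F$ with $V$ built from $p^1,p^2,C(\mathbf q)$ and $dM$ from $dW^1,dW^2,d\Psi$. First I would construct the frame. At the base point $x=a$ the nonlocal coefficients vanish, $C(\mathbf q)(t,a)=0$ and $\Psi(t,a)=0$, so the temporal equation at $x=a$ is a well-posed finite-dimensional Stratonovich system; solving it with data $(\mathbf m,\mathbf e_0)$ produces $(\mathbf u,\mathbf e)(t,a)$. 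For each fixed $(t,\omega)$ I then solve the spatial linear ODE $\partial_x F=UF$ on $S^1$ from this datum, which is well-posed since $\mathbf q(t,\cdot)\in W^{2,2}\hookrightarrow C^1$. Because $U,V,dM$ are skew-symmetric and Stratonovich calculus obeys the ordinary chain rule, $d\langle F_i,F_j\rangle=0$ along both evolutions, so the frame stays orthonormal; in particular $\mathbf u\in\mathbb S^2$ and $\mathbf e\in T_{\mathbf u}\mathbb S^2$, and the asserted $H^1(S^1,TS^2)$-regularity is inherited from that of $\mathbf q$.

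The core of the argument is to show that this field, built to satisfy \eqref{eqn:space_evolution} everywhere but \eqref{eqn:time_evolution} only at $x=a$, in fact satisfies \eqref{eqn:time_evolution} for all $x$. Introduce the defect $G:=d_tF-(VF)\,dt-(dM)F$, which vanishes at $x=a$ by construction. Differentiating in $x$ and substituting $\partial_xF=UF$ together with the Stratonovich product rule yields $\partial_x G = U G + \Xi\,F$, where $\Xi = d_tU-(\partial_xV)\,dt-\partial_x(dM)-[V,U]\,dt-[dM,U]$ is the zero-curvature (compatibility) form. The whole point is that $\Xi=0$ precisely when $\mathbf q$ solves the nonlinear stochastic heat equation \eqref{eqn:Nonlinear_Stoch_Heat} with $\mathbf p,C(\mathbf q),\Psi$ given by \eqref{eqn:internal_structure}: the off-diagonal $(1,2)/(1,3)$ entries of $\Xi=0$ combine, via $\mathbf q=q^1+iq^2$, into exactly \eqref{eqn:Nonlinear_Stoch_Heat}, while the $(2,3)$ entry integrates in $x$ from $a$ to reproduce the defining formulas for $C(\mathbf q)$ and $\Psi$. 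With $\Xi=0$ the defect obeys the homogeneous linear ODE $\partial_xG=UG$ with $G(t,a)=0$, whence $G\equiv0$ and $(\mathbf u,\mathbf e)$ solves the full system. This verification — carrying out the zero-curvature computation rigorously in the Stratonovich setting, keeping track of the It\^o--Stratonovich corrections, the spatial derivative $\partial_x(dM)$ of the (spatially coloured) noise, and the nonlocal terms $C,\Psi$ — is the step I expect to be the main obstacle.

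With the frame in hand, reading off the LLG equation is a direct frame computation. From \eqref{eqn:space_evolution}, $\mathbf u_x=q^1\mathbf e+q^2(\mathbf u\times\mathbf e)$ and hence $\mathbf u_{xx}=-|\mathbf q|^2\mathbf u+q^1_x\mathbf e+q^2_x(\mathbf u\times\mathbf e)$, so that $\mathbf u\times\mathbf u_{xx}=-q^2_x\mathbf e+q^1_x(\mathbf u\times\mathbf e)$ and $\mathbf u\times(\mathbf u\times\mathbf u_{xx})=-q^1_x\mathbf e-q^2_x(\mathbf u\times\mathbf e)$. Substituting $\mathbf p=(\alpha+i\beta)\mathbf q_x$, i.e. $p^1=\alpha q^1_x-\beta q^2_x$ and $p^2=\beta q^1_x+\alpha q^2_x$, the drift $p^1\mathbf e+p^2(\mathbf u\times\mathbf e)$ of $d_t\mathbf u$ in \eqref{eqn:time_evolution} matches $\beta\mathbf u\times\mathbf u_{xx}-\alpha\mathbf u\times(\mathbf u\times\mathbf u_{xx})$ identically. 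For the martingale part, a short cross-product identity using $\langle\mathbf u,\mathbf e\rangle=0$ and $|\mathbf u|=1$ gives $\mathbf u\times\circ d\widetilde{\mathbf W}=\mathbf e\circ dW^1+(\mathbf u\times\mathbf e)\circ dW^2$ (the $\mathbf u\,dW^3$ term lies in the kernel of $\mathbf u\times$), which is exactly the noise in the first row of \eqref{eqn:time_evolution}. Hence $\mathbf u$ solves \eqref{eqn:SLLGtarget}.

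It remains to identify $\widetilde{\mathbf W}$ and to match the Hashimoto transform. The covariation \eqref{eqn:Quadvariation} follows from the It\^o isometry applied to the integral defining $\widetilde{\mathbf W}$, using the expansion \eqref{eqn:noise_structure}, the independence of $W^1,W^2,W^3$, and the pointwise orthonormality of $\{\mathbf e,\mathbf e\times\mathbf u,\mathbf u\}$, which kills the cross terms and leaves the three diagonal contributions; I would interpret the Gaussian claim in the sense of this quadratic structure (cf. remark \ref{rem:Space_time_White_noise}). Finally, from $\mathbf u_x=q^1\mathbf e+q^2(\mathbf u\times\mathbf e)$ one reads $|\mathbf u_x|=|\mathbf q|$, and computing $\langle\mathbf u\times\mathbf u_x,\mathbf u_{xx}\rangle=q^1q^2_x-q^2q^1_x$ gives the torsion $\eta=\langle\mathbf u\times\mathbf u_x,\mathbf u_{xx}\rangle/|\mathbf u_x|^2=\partial_x\arg\mathbf q$; together with \eqref{eqn:Hashimoto_transform_inverse} this shows $\mathcal H(\mathbf u(t,\cdot))=\mathbf q(t,\cdot)$ once the base-point phase is fixed by the choice of $\mathbf e_0$, i.e. once $\mathbf e$ is recognised as the parallel (torsion-free) frame along $\mathbf u$. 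A remaining technical point, which I would address alongside the regularity bounds, is the closing-up/holonomy condition guaranteeing that the constructed frame is genuinely periodic on $S^1$.
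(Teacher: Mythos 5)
Your proposal follows essentially the same route as the paper: the heart of both arguments is the verification that the compatibility (zero-curvature) condition for the linear frame system \eqref{eqn:time_evolution}--\eqref{eqn:space_evolution} with the identifications \eqref{eqn:internal_structure} is exactly the nonlinear stochastic heat equation \eqref{eqn:Nonlinear_Stoch_Heat}, after which the LLG drift, the noise term, and the Hashimoto relation $|\mathbf{q}|=|\mathbf{u}_x|$, $\partial_x\arg\mathbf{q}=\eta(\mathbf{u})$ are read off from the frame exactly as you do. The one place you go beyond the paper is the explicit Frobenius-type construction (solve the temporal equation at $x=a$, propagate in $x$, show the defect $G$ satisfies a homogeneous linear equation with $G(t,a)=0$ and hence vanishes) together with the $S^1$-holonomy caveat: the paper simply asserts that the system is solvable if and only if the mixed partials commute and does not discuss periodicity, so these are refinements of, not departures from, its argument.
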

\begin{remark}\label{rem:Space_time_White_noise}
Note that if $c_l=1,l\in\mathbb{N}$ then from formula \eqref{eqn:Quadvariation} follows that
\[
\mathbb{E}\left[<\widetilde{\bf W}_t,\phi>_{L^2(S^1,\mathbb{R}^3)}<\widetilde{\bf W}_t,\psi>_{L^2(S^1,\mathbb{R}^3)}\right]=t<\phi,\psi>_{L^2(S^1,\mathbb{R}^3)},\phi,\psi\in L^2(S^1,\mathbb{R}^3),
\]
i.e. if $W^i, i=1,2$ are two independent real-valued space-time white noises then $\widetilde{\bf W}$ is an $\mathbb{R}^3$ valued space-time white noise.
\end{remark}
\begin{proof}[Proof]
The system \eqref{eqn:time_evolution}, \eqref{eqn:space_evolution}, \eqref{eqn:noise_structure}, \eqref{eqn:internal_structure} has a solution iff and only if compatibility conditions
\begin{align}
d_t\partial_x \mathbf{u} &= \partial_x d_t\mathbf{u},\label{eqn:uCompatibility}\\
d_t\partial_x \mathbf{e} &= \partial_x d_t\mathbf{e},\label{eqn:eCompatibility}
\end{align}
are satisfied. First, we will look at the condition \eqref{eqn:uCompatibility}.
We have by elementary calculations and equality \eqref{eqn:space_evolution} that
\begin{align}
d_t \partial_x\mathbf{u} &= (d_t q^1-q^2 C(\mathbf{q})-d\Psi)\mathbf{e}-(p^1q^1+p^2q^2+q^1dW^1+q^2dW^2)\mathbf{u}\label{eqn:txu_der}\\
&+(d_t q^2+q^1 C(\mathbf{q})+d\Psi)\mathbf{u}\times\mathbf{e},\nonumber
\end{align}
\begin{align}
\partial_x d_t\mathbf{u} &= (\partial_x p^1+d_t\partial_x W^1)\mathbf{e}-
(p^1q^1+p^2q^2+q^1dW^1+q^2dW^2)\mathbf{u}\label{eqn:xtu_der}\\
&+(\partial_x p^2+d_t\partial_x W^2)\mathbf{u}\times\mathbf{e}.\nonumber 
\end{align}
Equating coefficients in \eqref{eqn:txu_der} and \eqref{eqn:xtu_der} we can deduce that
\begin{equation}
d_t\mathbf{q}=\left[-i\mathbf{q}C(\mathbf{q})+\partial_x\mathbf{p}\right]\,dt-i\mathbf{q}d\Psi
+d\partial_x(W^1+i W^2).\label{eqn:uCompatibility_1}
\end{equation}
Now let us look at the second compatibility condition \eqref{eqn:eCompatibility}.
We have by elementary calculations and equality \eqref{eqn:space_evolution} that
\begin{equation}
d_t \partial_x\mathbf{e} = - d_t q^1\mathbf{u}- (q^1p^1 dt + dW^1)\mathbf{e}-
(q^1p^2 +q^1dW^2) \mathbf{u}\times\mathbf{e},\label{eqn:txe_der}
\end{equation}
and
\begin{align}\label{eqn:xte_der}
\partial_x d_t\mathbf{e} &= ([-\partial_x p^1-q^2C(q)]\,dt-d_t\partial_x W^1-q^2d\Psi)\mathbf{u}
-(p^1q^1+q^1dW^1)\mathbf{e}\\
&+([-p^1q^2 + \partial_x C(\mathbf{q})]\,dt-q^2 dW^1+d\partial_x\Psi)\mathbf{u}\times\mathbf{e}.\nonumber
\end{align}
Equating coefficients in \eqref{eqn:txe_der} and \eqref{eqn:xte_der} we can deduce that
\begin{equation}\label{eqn:eCompatibility_1}
[p^1q^2-p^2q^1-\partial_x C(\mathbf{q})]\,dt+q^2 dW^1-q^1 dW^2-d\partial_x\Psi = 0.
\end{equation}
Now we can notice that compatibility conditions \eqref{eqn:uCompatibility_1} and \eqref{eqn:eCompatibility_1} together with \eqref{eqn:internal_structure} give us equation
\eqref{eqn:Nonlinear_Stoch_Heat}. It remains to show that equation \eqref{eqn:SLLGtarget} holds.
We have by equation \eqref{eqn:time_evolution} that
\begin{equation}
d_t \mathbf{u} = (p^1\mathbf{e}+ p^2\mathbf{u}\times\mathbf{e})\,dt+ dW^1\mathbf{e}+dW^2\mathbf{u}\times\mathbf{e}.\label{eqn:SLLGidentify_1}
\end{equation}
Moreover, we can deduce by elementary calculation and equation \eqref{eqn:space_evolution} that
\begin{equation*}
\partial_{xx}^2 \mathbf{u}=q^1_x\mathbf{e}+q_x^2\mathbf{u}\times\mathbf{e}-|\mathbf{q}|^2\mathbf{u},
\end{equation*}
and, consequently,
\begin{equation*}
\beta \mathbf{u}\times \partial_{xx}^2 \mathbf{u}-\alpha \mathbf{u}\times(\mathbf{u}\times\partial_{xx}^2 \mathbf{u})=
(\alpha q_x^1-\beta q_x^2)\mathbf{e}+(\alpha q_x^2+\beta q_x^1)\mathbf{u}\times\mathbf{e}
\end{equation*}
Hence identity \eqref{eqn:internal_structure} $\mathbf{p}=(\alpha+i\beta)\mathbf{q}_x$ implies that
\begin{equation}
\beta \mathbf{u}\times \partial_{xx}^2 \mathbf{u}-\alpha \mathbf{u}\times(\mathbf{u}\times\partial_{xx}^2 \mathbf{u})=
p^1\mathbf{e}+p^2\mathbf{u}\times\mathbf{e}.\label{eqn:SLLGidentify_2}
\end{equation}
Combining \eqref{eqn:SLLGidentify_1} and \eqref{eqn:SLLGidentify_2} we deduce equation 
\eqref{eqn:SLLGtarget}. It remains to show that $\mathbf{u}$ and $\mathbf{q}$ are connected through Hashimoto transform. We will deduce it from the equation \eqref{eqn:space_evolution}.
First, by antisymmetry of the matrices in the equations \eqref{eqn:space_evolution} and \eqref{eqn:time_evolution} we can see that quanities $|\mathbf{u}|_{\mathbf{R}^3}$, $|\mathbf{e}|_{\mathbf{R}^3}$, $<\mathbf{u},\mathbf{e}>_{\mathbf{R}^3}$ are constants both in space and time. 
Consequently, by conditions on $\mathbf{u}_0$ and $\mathbf{e}_0$ we have that
$|\mathbf{u}|_{\mathbf{R}^3}=|\mathbf{e}|_{\mathbf{R}^3}=1$, $<\mathbf{u},\mathbf{e}>_{\mathbf{R}^3}=0$. We have system
\begin{align}\label{eqn:space_evo_1}
\partial_x\mathbf{u} &= q^1 \mathbf{e}+ q^2\mathbf{u}\times\mathbf{e},\\
\partial_x\mathbf{e} &= -q^1 \mathbf{u}.\label{eqn:space_evo_2}
\end{align}
Equation \eqref{eqn:space_evo_1} immediately implies that $|\mathbf{q}|=|\partial_x\mathbf{u}|$.
Consequently, we have
\begin{equation}\label{eqn:q_u_aux_1}
q^1=|\partial_x\mathbf{u}|\cos\omega,\,\, q^2=|\partial_x\mathbf{u}|\sin\omega.
\end{equation}
Moreover, elementary calculations allow us to deduce from equations \eqref{eqn:space_evo_1}-\eqref{eqn:space_evo_2} that 
\begin{align}\label{eqn:e_representation}
\mathbf{e} &= \frac{q^1 \partial_x\mathbf{u}+ q^2\mathbf{u}\times \partial_x\mathbf{u}}{|\partial_x\mathbf{u}|^2}\\
&= \frac{\cos\omega \partial_x\mathbf{u}+ \sin\omega\mathbf{u}\times \partial_x\mathbf{u}}{|\partial_x\mathbf{u}|}.\label{eqn:e_representation_2}
\end{align}
Furthermore, equation \eqref{eqn:space_evo_2} implies that 
\begin{equation}\label{eqn:q_u_aux_2}
<\partial_x\mathbf{e},\mathbf{u}\times\mathbf{e}>_{\mathbb{R}^3}=0.
\end{equation}
Inserting in the equation \eqref{eqn:q_u_aux_2} representation \eqref{eqn:e_representation_2} of $\mathbf{e}$ we deduce that 
\begin{equation}\label{eqn:q_u_aux_3}
\omega_x=\frac{<\mathbf{u}\times\partial_x\mathbf{u},\partial_{xx}^2\mathbf{u}>}{|\partial_x\mathbf{u}|^2}.
\end{equation}
Equation \eqref{eqn:q_u_aux_3} together with representation \eqref{eqn:q_u_aux_1} implies the result.
\end{proof}
\begin{remark}
The equation \eqref{eqn:Nonlinear_Stoch_Heat} is a zero-curvature representation (although no dependence on spectral parameter here) of the system \eqref{eqn:time_evolution}, \eqref{eqn:space_evolution}, \eqref{eqn:noise_structure}, \eqref{eqn:internal_structure}. Consequently, natural question is if there exist soliton solutions for LLG and SLLG equations?
\end{remark}
\begin{remark}
The existence of the strong solution of the equation \eqref{eqn:Nonlinear_Stoch_Heat} is a subject of further research.
\end{remark}
\begin{remark}
The existence and regularity assumptions on the solution of the equation \eqref{eqn:Nonlinear_Stoch_Heat} in the theorem \ref{thm:main} can be circumvented 
by consideration of proper space discretisation of the system \eqref{eqn:time_evolution}-\eqref{eqn:space_evolution}.
\end{remark}
\begin{remark}
Here we consider only the case of exchange energy $\mathcal{E}(u):=\int\limits_{\mathbb{S}^1}|\mathbf{u}_x|^2\,dx$ in the SLLG equation. The general case can be obtained in the same fashion
by properly modifying definitions of $\mathbf{p}$ and $C(\mathbf{q})$ in \eqref{eqn:internal_structure}.
\end{remark}
\begin{remark}
System \eqref{eqn:time_evolution}, \eqref{eqn:space_evolution}, \eqref{eqn:internal_structure} in the case of absence of noise ($W^1=W^2=\Psi=0$) and zero viscosity coefficient has been considered in \cite{NSVZ2007}.
\end{remark}
\begin{remark}
It would be of interest to see if in the case of absence of the noise term the equation \eqref{eqn:Nonlinear_Stoch_Heat} is equivalent to the complex Ginzburg-Landau equation deduced in \cite{Melcher2012}.
\end{remark}
\paragraph*{Acknowledgements:} I would like to thank Ben Goldys, Zdzislaw Brze\'{z}niak and Andreas Prohl for useful comments and attention to the work. The support by the ARC Discovery grant DP120101886 is gratefully acknowledged.

%\begin{remark}
%The future work would include generalization of the result to multidimensional case.
%\end{remark}

%In dimension one solution of the LLG equation of the form $\mathbf{u}=\mathbf{u}(x-\gamma t)$ will satisfy system
%\[
%-\gamma \mathbf{u}_x= \mathbf{u}\times \mathbf{u}_{xx}-\alpha \mathbf{u}\times(\mathbf{u}\times\mathbf{u}_{xx}). 
%\]
%or, equivalently, introducing auxiliary function $\mathbf{Z}\equiv \mathbf{u}\times \mathbf{u}_{x}+\frac{\gamma}{1+\alpha^2}\mathbf{u}$, we get
%\[
%\mathbf{Z}_{xx}=\mathbf{Z}\times \mathbf{Z}_x-\frac{\gamma\alpha}{1+\alpha^2}\mathbf{Z}_x
%\]
%and $\mathbf{u}$ can be restored from $\mathbf{Z}$ by the formula
%\[
%\mathbf{u}=\frac{(1+\alpha^2)^2}{\gamma^2\alpha}\mathbf{Z}_x+\frac{1+\alpha^2}{\gamma}\mathbf{Z}.
%\]

\end{document}